\definecolor{refcolor}{rgb}{0.1,0.2,0.88}
\theoremstyle{definition}
\newtheorem{theorem}{Theorem}[section]
\newtheorem{definition}[theorem]{Definition}
\newtheorem{proposition}[theorem]{Proposition}
\newtheorem{corollary}[theorem]{Corollary}
\newtheorem{remark}[theorem]{Remark}
\def\({\left(}
\def\){\right)}
\newcommand{\mc}[1]{\mathcal{#1}}
\newcommand{\ms}[1]{\mathscr{#1}}
\newcommand{\R}{\mathbb{R}}
\newcommand{\N}{\mathbb{N}}
\newcommand{\tn}{\textnormal}
\newcommand{\ie}{\textit{i.e.}}
\newcommand{\sref}[1]{\S\ref{#1}}
\newcommand{\lleq}{\underline{\ll}}
\newcommand{\causal}{\leq}
\newcommand{\image}[3]{
\begin{figure}[ht]
\begin{center}
\includegraphics[width=#2\textwidth]{#1}
\caption{\small{\label{#1}#3}}
\end{center}
\end{figure}
}
\renewcommand\section{\@startsection {section}{1}{\z@}%
                                   {-3.5ex \@plus -1ex \@minus -.2ex}%
                                   {2.3ex \@plus.2ex}%
                                   {\center\normalfont\Large\bfseries}}
\renewcommand\subsection{\@startsection {subsection}{2}{\z@}%
                                   {-3.5ex \@plus -1ex \@minus -.2ex}%
                                   {2.3ex \@plus.2ex}%
                                   {\normalfont\large\bfseries}}
\renewcommand\subsubsection{\@startsection {subsubsection}{3}{\z@}%
                                   {-3.5ex \@plus -1ex \@minus -.2ex}%
                                   {2.3ex \@plus.2ex}%
                                   {\normalfont\bfseries}}
\begin{document}

\title{Spacetime causal structure and dimension from horismotic relation} 
\author{O.C. Stoica}
\date{\today. Horia Hulubei National Institute for Physics and Nuclear Engineering, Bucharest, Romania. E-mail: cristi.stoica@theory.nipne.ro, holotronix@gmail.com}

\begin{abstract}
A reflexive relation on a set can be a starting point in defining the causal structure of a spacetime in General Relativity and other relativistic theories of gravity. If we identify this relation as the relation between lightlike separated events (the horismos relation), we can construct in a natural way the entire causal structure: causal and chronological relations, causal curves, and a topology. By imposing a simple additional condition, the structure gains a definite number of dimensions. This construction works both with continuous and discrete spacetimes. The dimensionality is obtained also in the discrete case, so this approach can be suited to prove the fundamental conjecture of causal sets. Other simple conditions lead to a differentiable manifold with a conformal structure (the metric up to a scaling factor) as in Lorentzian manifolds. This structure provides a simple and general reconstruction of the spacetime in relativistic theories of gravity, which normally requires topological structure, differential structure, geometric structure (which decomposes in the conformal structure, giving the causal relations, and the volume element). Motivations for such a reconstruction come from relativistic theories of gravity, where the conformal structure is important, from the problem of singularities, and from Quantum Gravity, where various discretization methods are pursued, particularly in the causal sets approach.
\end{abstract}
\keywords{General Relativity, Relativistic Theories of Gravity, Causal Structure, Causal Sets, Spacetime, Horismos}


\maketitle


\section{Introduction}
\label{s_intro}

In Lorentzian manifolds, the {\em causal relations} are defined as holding between events that can be joined by future oriented causal curves. Causal relations give the {\em causal structure} of a spacetime. In \cite{KronheimerPenrose1967CausalSpaces}, the causal structure was used to recover the {\em horismos} and {\em chronology relations} of a spacetime (the relations between events that can be joined by future lightlike, respectively timelike curves). The causal structure is known to be sufficient to recover the metric of the spacetime up to a conformal factor. The conformal factor can be obtained if in addition we know a measure which gives the volume element \cite{Zeeman1964CausalityLorentz,Zeeman1967TopologyMinkowski,Finkelstein1969SpaceTimeCode,Hawking1976NewTopologyST,Malament1977TopologySpaceTime}. This works for {\em distinguishing spacetimes} -- spacetimes whose events can be distinguished by the chronological relations they have with the other events (for example, spacetimes containing closed timelike curves are not distinguishing). Moreover, for distinguishing spacetimes, the causal structure can be obtained from the horismos relation \cite{Minguzzi2009HorismosGeneratesCausal}.

The fact that the causal structure and a measure are enough to recover the geometry of spacetime in General Relativity and other relativistic theories of gravity, and the hope that discretization may be the way to Quantum Gravity by providing an UV cutoff, motivated the study of sets ordered by the causal order \cite{Finkelstein1969SpaceTimeCode,tHooft1978QGCausalSets,Myrheim1978StatisticalGeometry}. Another motivation was that a discrete structure could account for the black hole entropy \cite{Bekenstein1973BHEntropy,Bardeen1973BHThermodynamics}.
These reasons led in particular to the idea of {\em causal set}, defined as a set $C$ endowed with a partial order $\causal$, which therefore is reflexive, antisymmetric and transitive (in standard causal set articles and some general relativity articles like \cite{Penrose1972TopologyInRelativity} the notation ``$\prec$'' is used, but in standard general relativity articles and textbooks like \cite{HE95} the notation ``$\causal$'' is preferred). In addition, it is required that for any $a,b\in C$, the cardinality of the set $\{p\in C|a\causal p\causal b\}$ is finite \cite{Bombelli1987SpaceTimeCausalSet,Sorkin1990SpacetimeAndCausalSets,Sorkin2005CausalSets}. In the causal set approach, the continuous spacetime is considered to be an effective limit of the causal set. The measure used to recover the volume element is given by the number of events in each region. As Sorkin put it, ``order plus number equals geometry''.

However, causal sets don't have a definite dimension. One sort of dimension is the smallest dimension of a Minkowski spacetime in which the causal set can be embedded (flat conformal dimension), but there are more possible definitions of dimension, such as statistical and spectral dimensions \cite{Meyer1988DimensionOfCausalSets,Sorkin1990SpacetimeAndCausalSets,Reid2003ManifoldDimensionCausalSet,Eichhorn2014SpectralDimCausalSet}, none of them satisfactory enough for this problem. This is probably the main reason why it is so difficult to prove and even to formulate mathematically the fundamental conjecture of causal sets ({\em Hauptvermutung}), that the causal set can recover within reasonable approximation (yet to be defined) the manifold structure. In the limiting case of infinite event density uniformly distributed, the conjecture has been proven \cite{Bombelli1989OriginLorentzianGeometry}, but in general the problem remains open. 


In the following, we consider sets of events endowed with a reflexive relation which represents the horismos relation. We do this in the most general settings, including both the continuous and the discrete cases. We show that from the horismos relation one can recover the topology, the causal structure, and, with simple additional requirements, the dimension of spacetime and the differential structure.

The article is organized as follows. We start with the definitions and main properties of the horismotic sets in Section \sref{s_horismotic_sets_properties}. From them, we derive the causal structure in Section \sref{s_horismotic_sets_causal_structure}, and show how to obtain a topology in Section \sref{s_topology}. Then, based on the horismotic sets, we introduce  the causal curves in Section \sref{s_causal_curves}. To recover the dimension, we start with a simple example in two dimensions, which contains the main ingredients in Section \sref{s_dim_example}. Then, we introduce a notion of dimension on horismotic sets in Section \sref{s_dim_dim}, which allows us to construct lightcone coordinates in Section \sref{s_lightcone_coords}. These allow straightforwardly to recover the structure of a topological manifold, and under reasonable conditions, the differential structure and the conformal structure in Section \sref{s_spacetime}.

\section{Horismotic sets}
\label{s_horismotic_sets}

\subsection{Elementary properties}
\label{s_horismotic_sets_properties}

\begin{definition}(horismotic set)
\label{def_horismotic_set}
A {\em horismotic set} $(\mc M,\to)$ is a set $\mc M$ whose elements we call {\em events}, endowed with a binary relation $\to$, which is {\em reflexive} ($a \to a$ for any $a\in\mc M$). If $a\to b$, one says that $a$ and $b$ are in the {\em horismos} relation, or simply that $a$ {\em horismos} $b$.
For an event $a\in\mc M$, we define its {\em future horismos} or {\em future lightcone} as $E^+(a):=\{b\in\mc M|a \to b\}$, and its {\em past horismos} or {\em past lightcone} as $E^-(a):=\{b\in\mc M|b \to a\}$.
\end{definition}

The horismos relation $a\to b$ has the physical meaning that a light ray can be emitted from the event $a$ to $b$, thus it represents the lightlike separation between events. This relation is not transitive.

\begin{definition}
The horismotic relation $\to$ is {\em antisymmetric} ({\ie} for any two events $a$ and $b$ from $\mc M$, from $a\to b$ and $b \to a$ follows $a=b$) if and only if for any event $a$ from $\mc M$, $E^+(a)\cap E^-(a)=\{a\}$.
\end{definition}

\begin{proposition}
\label{thm_antisymmetric_distinguishing}
If the horismotic relation $\to$ is antisymmetric, then for any two events $a$ and $b$ from $\mc M$, from $E^+(a)= E^+(b)$ follows that $a=b$.
\end{proposition}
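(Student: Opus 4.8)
The plan is to unfold the definition of the future horismos $E^+$ and exploit reflexivity to convert the hypothesis $E^+(a)=E^+(b)$ into a pair of horismos relations between $a$ and $b$, then invoke antisymmetry to collapse them. The argument is purely set-theoretic, and it uses only the antisymmetry of $\to$ directly, not the equivalent characterization $E^+(a)\cap E^-(a)=\{a\}$ from the preceding definition.

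First I would use reflexivity: since $a\to a$, we have $a\in E^+(a)$. The hypothesis $E^+(a)=E^+(b)$ then gives $a\in E^+(b)$, which by the definition of the future horismos means exactly $b\to a$. Symmetrically, reflexivity yields $b\in E^+(b)=E^+(a)$, so $a\to b$. Having established both $a\to b$ and $b\to a$, antisymmetry forces $a=b$, which is the claim.

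I expect no genuine obstacle here: the statement is an immediate consequence of reflexivity and antisymmetry, and the only point requiring minor care is the bookkeeping of which direction of the relation each membership encodes. An element of $E^+(x)$ is something $x$ maps to, so $a\in E^+(b)$ reads $b\to a$, not $a\to b$; swapping these would break the argument. It is worth noting that reflexivity is essential: without it one could have $E^+(a)=E^+(b)=\varnothing$ for distinct $a,b$ with $\to$ vacuously antisymmetric, so the conclusion would fail. This also clarifies the content of the proposition, namely that under antisymmetry the future lightcone $E^+(a)$ determines the event $a$ uniquely, so that events are distinguished by their future horismos alone.
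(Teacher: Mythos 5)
Your proof is correct and is essentially identical to the paper's: both use reflexivity to get $a\in E^+(a)=E^+(b)$ and $b\in E^+(b)=E^+(a)$, read off $b\to a$ and $a\to b$, and conclude by antisymmetry. Your extra care about the direction each membership encodes is a nice touch but does not change the argument.
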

\begin{proof}
If $E^+(a)=E^+(b)$, then, since $a\to a$ and $b \to b$, it follows that $a\in E^+(a)=E^+(b)$ and $b\in E^+(b)= E^+(a)$. Hence, $a \to b$ and $b \to a$, and from antisymmetry, $a=b$.
\end{proof}

\subsection{Causal structure}
\label{s_horismotic_sets_causal_structure}

\begin{definition}
\label{def_relations}
A {\em horismotic chain} between two events $a,b\in \mc M$ is a set of $k+1$ events $\{c_0,\ldots,c_k\}\in\mc M$, where $k\in\N$ is a non-negative integer, so that $c_0=a$, $c_{i-1}\to c_i$ for all $i$, and $c_k=b$. The length of the chain is then defined to be $k+1$. Let's define the {\em causal relation} between two events $a,b\in\mc M$,  by $a\causal b$ iff there is a horismotic chain joining $a$ and $b$.
We define the {\em chronology relation} $\ll$ on $\mc M$ by $a \ll b$ iff $a \causal b$ and not $a\to b$. The relation $\ll$ represents timelike separation between events.
We also define the relation $\lleq$ by $a\lleq b\Leftrightarrow (a\ll b)\vee(a=b)$.
Two events $a,b\in\mc M$ are {\em spacelike separated}, $a \natural b$, iff neither $a \causal b$ nor $b \causal a$.
\end{definition}

\begin{definition}
\label{def_cones_intervals}
For an event $a\in\mc M$, we define its {\em chronological future} by $I^+(a):=\{b\in\mc M|a\ll b\}$, and its {\em chronological past} by $I^-(a):=\{b\in\mc M|b\ll a\}$. We define its {\em causal future} by $J^+(a):=\{b\in\mc M|a\causal b\}$, and its {\em causal past} by $J^-(a):=\{b\in\mc M|b\causal a\}$.
We define the {\em causal cone} of $a$ by $J(a)=J^+(a)\cup J^-(a)$, the {\em chronological cone} of $a$ by $I(a)=I^+(a)\cup I^-(a)$, and the {\em lightcone} of $a$ by $E(a)=E^+(a)\cup E^-(a)$.
We define $E_\ast(a):=E(a)\setminus\{a\}$, $E^\pm_\ast(a):=E^\pm(a)\setminus\{a\}$, $J_\ast(a):=J(a)\setminus\{a\}$, and $J^\pm_\ast(a):=J^\pm(a)\setminus\{a\}$.
Two events $a,b\in\mc M$ define a {\em chronological interval} $I(a,b):=I^+(a)\cap I^-(b)$, and a {\em causal interval} $J(a,b):=J^+(a)\cap J^-(b)$.
\end{definition}

\begin{proposition}
Let $a,b\in\mc M$. Then, $b\in J^+(a) \Leftrightarrow a\in J^-(b)$, $b\in I^+(a) \Leftrightarrow a\in I^-(b)$, and $b\in E^+(a) \Leftrightarrow a\in E^-(b)$.
\end{proposition}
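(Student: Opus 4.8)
The plan is to observe that each of the three biconditionals is purely definitional: every one of the relations $\to$, $\causal$, and $\ll$ gives rise to a ``future'' set and a ``past'' set whose defining conditions are literally the same binary relation read in opposite directions. Thus the claimed equivalences amount to nothing more than unfolding the set-builder notation introduced in Definitions \ref{def_horismotic_set}, \ref{def_relations}, and \ref{def_cones_intervals}, with no appeal to any structural property of $\to$.

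Concretely, for the horismos case I would unfold $b\in E^+(a)$ to the statement $a\to b$ by the definition $E^+(a)=\{b\in\mc M\mid a\to b\}$, and independently unfold $a\in E^-(b)$ to the statement $a\to b$ by the definition $E^-(b)=\{c\in\mc M\mid c\to b\}$ (renaming the bound variable and substituting $c=a$). Since both membership conditions reduce to the single proposition $a\to b$, the equivalence $b\in E^+(a)\Leftrightarrow a\in E^-(b)$ follows immediately. The causal and chronological cases are handled identically: replacing $\to$ throughout by $\causal$ makes both sides reduce to $a\causal b$, yielding $b\in J^+(a)\Leftrightarrow a\in J^-(b)$; replacing it by $\ll$ makes both sides reduce to $a\ll b$, yielding $b\in I^+(a)\Leftrightarrow a\in I^-(b)$.

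I do not expect any genuine obstacle here, since the argument is simply a translation between two notations for the same relation. The only point requiring minor care is the bookkeeping: one must keep straight which event occupies the source slot and which the target slot of each relation as the roles of $a$ and $b$ are swapped between a future set and the corresponding past set. Note in particular that neither antisymmetry nor transitivity is used; the statement holds for an arbitrary reflexive relation precisely because it is symmetric in \emph{form} rather than in content.
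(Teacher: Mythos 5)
Your proposal is correct and follows exactly the route the paper intends: the paper's proof simply states that the claim ``follows immediately from Definitions \ref{def_horismotic_set}, \ref{def_relations}, and \ref{def_cones_intervals},'' and your unfolding of the set-builder notation is precisely that argument made explicit. No further comment is needed.
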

\begin{proof}
Follows immediately from Definitions \ref{def_horismotic_set}, \ref{def_relations}, and \ref{def_cones_intervals}.
\end{proof}

\begin{proposition}
\label{thm_causality_transitive}
The causal relations $\causal$ and $\ll$ are {\em transitive}.
\end{proposition}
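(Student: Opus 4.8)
The plan is to treat the two relations separately, disposing of $\causal$ by a direct concatenation of chains and then reducing the transitivity of $\ll$ to a single residual claim.

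For $\causal$, suppose $a\causal b$ and $b\causal c$. By Definition \ref{def_relations} there are horismotic chains $a=c_0\to c_1\to\cdots\to c_k=b$ and $b=d_0\to d_1\to\cdots\to d_m=c$. Since $c_k=b=d_0$, the concatenated sequence $a=c_0\to\cdots\to c_k=d_0\to\cdots\to d_m=c$ is again a horismotic chain, every consecutive pair lying in $\to$, now of length $k+m+1$; this witnesses $a\causal c$. No separate base case is needed, the trivial chain $\{a\}$ already giving reflexivity of $\causal$.

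For $\ll$, suppose $a\ll b$ and $b\ll c$, that is $a\causal b$, $b\causal c$, $\neg(a\to b)$ and $\neg(b\to c)$. The transitivity of $\causal$ just established gives $a\causal c$, so it only remains to show $\neg(a\to c)$, and this is the crux. First I would record that $a\ll b$ forces every chain from $a$ to $b$ to have length at least $3$: a chain of length $1$ would mean $a=b$, whence $a\to b$ by reflexivity, and a chain of length $2$ is $a\to b$ outright, both contradicting $\neg(a\to b)$. Thus the concatenated chain from $a$ to $c$ genuinely passes through intermediate events, but by itself this does not exclude a direct horismos $a\to c$.

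Ruling out that lightlike ``shortcut'' running alongside the broken chain is exactly where I expect the main obstacle to lie, and it is where the physical content of $\to$ as lightlike separation must enter. Reflexivity and antisymmetry alone do not suffice: one can build a reflexive, indeed antisymmetric, relation with $a\to m\to b\to n\to c$ and simultaneously $a\to c$, which gives $a\ll b$ and $b\ll c$ yet not $a\ll c$. The way I would close the gap is to invoke a ``push-up'' property in the spirit of Lorentzian causality, namely that $a\ll e$ together with $e\causal c$ implies $a\ll c$ (and dually); applied with $e=b$, using $a\ll b$ and $b\causal c$, this yields $a\ll c$ and in particular $\neg(a\to c)$. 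Since the statement for $\ll$ cannot hold in full generality without such an input, I would either adopt this as an explicit standing hypothesis on the horismotic sets that model spacetime, or derive it from the additional structure imposed later for genuine spacetimes.
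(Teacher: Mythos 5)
Your concatenation argument for $\causal$ is exactly what the paper's one-line proof (``follows immediately from the definitions'') amounts to: a horismotic chain from $a$ to $b$ glued at the shared endpoint $b$ to a chain from $b$ to $c$ is again a horismotic chain, so $a\causal c$. That half is correct and complete, and no different in substance from the paper.

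For $\ll$ you have identified a genuine problem, but it lies in the paper rather than in your argument. The paper again asserts that transitivity is immediate from the definitions, yet, as you observe, the definitions only deliver $a\causal c$; nothing rules out $a\to c$, and $\neg(a\to c)$ is precisely what $a\ll c$ requires. Your five-event configuration ($a\to m\to b\to n\to c$ together with $a\to c$ and the reflexive pairs) is a legitimate horismotic set in the sense of Definition \ref{def_horismotic_set} --- it is even antisymmetric --- and in it $a\ll b$ and $b\ll c$ hold while $a\ll c$ fails. So transitivity of $\ll$ is false for general horismotic sets, and the paper's proof glosses over exactly the step you flag. The push-up property you propose ($a\ll b$ and $b\causal c$ imply $a\ll c$, and its dual) is the standard Kronheimer--Penrose axiom that repairs this; it holds in distinguishing Lorentzian manifolds but is not derivable from reflexivity of $\to$ alone, and the paper never imposes it. Your instinct to state it as an explicit standing hypothesis is the right fix; without some such hypothesis the $\ll$ half of the proposition cannot be proved because it is not true.
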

\begin{proof}
Follows immediately from the definitions of the relations $\causal$ and $\ll$.
\end{proof}

The causal relation $\causal$ is the smallest transitive extension of the horismos relation $\to$.

\begin{definition}
\label{def_distinguishing}
A horismotic set $\mc M$ is said to be {\em future (past) distinguishing at an event} $a\in\mc M$ if for any $b\in \mc M$, $b\neq a$ implies $I^+(a)\neq I^+(b)$ (respectively $I^-(a)\neq I^-(b)$). It is said to be {\em future (past) distinguishing} if it is future (past) distinguishing at all of its events. It is said to be {\em distinguishing} if it is both future and past distinguishing.
\end{definition}

Many of the properties of the causal and chronological relations known from General Relativity and Lorentzian manifolds in general \cite{Penrose1972TopologyInRelativity} can be derived in the settings of horismotic sets.

\subsection{The topology}
\label{s_topology}

We can endow $\mc M$ with a structure of topological space, generated by finite intersections and unions of open sets the sets of the form $I^\pm(a)$. As an example, consider the spacetime of General Relativity and other relativistic theories of gravity. The sets of the form $I^\pm(a)$ are the interiors of future and past lightcones, and are indeed open sets, and generate the {\em Alexandrov interval topology}. This topology coincides with the manifold topology iff it is Hausdorff, and iff the spacetime is {\em strongly causal} (at each event there is an open set $U$ so that timelike curves that leave $U$ don't return) \cite{Penrose1972TopologyInRelativity}.

But not any horismotic set has a definite dimension, nor it is locally homeomorphic to $\R^n$. Additional conditions are needed, and will be provided in the following.

\subsection{Causal curves}
\label{s_causal_curves}

To define causal curves in Lorentzian manifolds, one usually imposes conditions on the vectors tangent to the curve \cite{Penrose1972TopologyInRelativity}. However, by default a horismotic set doesn't have a differential structure, so here we will give a definition that doesn't require a differential and not even a topological structure.

\begin{definition}
\label{def_causal_curve}
Let $\triangleleft$ denote any of the relations $\to$, $\causal$, and $\lleq$ on a horismotic set $\mc M$.

An {\em open curve with respect to the relation $\triangleleft$} defined on a horismotic set $\mc M$ is a set of events $\gamma\subset\mc M$ so that the following two conditions hold
\begin{enumerate}
	\item 
the relation $\triangleleft$ is {\em total} on $\gamma$, that is, for any $a,b\in \gamma$, $a\neq b$, either $a \triangleleft b$ or $b \triangleleft a$,
	\item 
for any pair $a,b\in \gamma$, $a\triangleleft b$, if there is an event $c\in\mc M\setminus\gamma$ so that $a\triangleleft c$ and $c \triangleleft b$, then the restriction of the relation $\triangleleft$ to the set $\gamma\cup\{c\}$ is not total.
\end{enumerate}
A {\em loop or closed curve with respect to the relation $\triangleleft$} defined on a horismotic set $\mc M$ is a set of events $\lambda\subset\mc M$ so that, for any event $a\in\lambda$, the set $\lambda\setminus \{a\}$ is an open curve with respect to the same relation.

\begin{remark}
Note that usually a curve is defined as the image of a continuous injective function $\gamma:[x,y]\to M$, where $[x,y]\subset\R$, and $M$ is a topological space. Therefore, it is a topological subspace of $M$, and in the same time a totally ordered set, with the order induced by the order on the interval $[x,y]$. Definition \ref{def_causal_curve} is more general, since it applies to horismotic sets, in particular to both discrete and continuous spacetimes. In Section \sref{s_topology} the horismotic set $\mc M$ was endowed with a topology, the Alexandrov interval topology, and a curve as in Definition \ref{def_causal_curve} is still a topological subspace of $\mc M$, which has the property that it is totally ordered with respect to the relation $\triangleleft$. In the particular case when $\mc M$ is a manifold with distinguishing causal structure, the notion of curve defined here coincides with the usual notion of curve.
\end{remark}

For simplicity, in the following by ``curve'' we will understand ``open curve'', and by ``loop'', ``closed curve''.
We denote by $\ms C(\mc M,\triangleleft)$ and $\ms L(\mc M,\triangleleft)$ the set of curves, respectively loops, with respect to the relation $\triangleleft$.
Let $\gamma,\gamma'\in\ms C(\mc M,\triangleleft)$ be two curves. If $\gamma\subseteq \gamma'$, then $\gamma'$ is said to be an {\em extension} of $\gamma$, and $\gamma$ is named a {\em subcurve} of $\gamma'$. If for any extension $\gamma'$ of the curve $\gamma$ follows than $\gamma'=\gamma$, we say that $\gamma$ is an {\em inextensible curve}. If an extension $\gamma'$ of a curve $\gamma$ is inextensible, we say that $\gamma'$ is a {\em maximal extension} of $\gamma$.

If $a\in\gamma\in\ms C(\mc M,\triangleleft)$, then $a$ defines two curves for which $\gamma$ is an extension: $\gamma_{a+}:=\{b\in\gamma,a\triangleleft b\}$, and $\gamma_{a-}:=\{b\in\gamma,b\triangleleft a\}$. If $a,b\in \gamma$, $a\triangleleft b$, then they define a curve $\gamma_{ab}:=\gamma_{a+}\cap\gamma_{b-}$, and we call it the {\em segment} of the curve $\gamma$ determined by $a$ and $b$.

A curve from $\ms C(\mc M,\causal)$ is called {\em causal curve}. 
A curve from $\ms C(\mc M,\lleq)$ is called {\em chronological curve}. 
A curve from $\ms C(\mc M,\to)$ is called {\em lightlike curve}. 
Similar definitions are given for loops.
\end{definition}

\begin{remark}
It is easy to see that $\ms L(\mc M,\to)\subset\ms L(\mc M,\causal)$, $\ms L(\mc M,\lleq)\subset\ms L(\mc M,\causal)$, $\ms C(\mc M,\to)\subset\ms C(\mc M,\causal)$, and $\ms C(\mc M,\lleq)\subset\ms C(\mc M,\causal)$.
Also, if the horismos relation $\to$ is antisymmetric, then there are no closed lightlike curves, $\ms L(\mc M,\to)=\emptyset$.
Even when the relation $\to$ is antisymmetric, $\causal$ and $\lleq$ are not necessarily antisymmetric, so if we want to avoid closed causal and chronological curves, we have to add this as a condition.
\end{remark}

\begin{definition}
\label{def_causal_spacetime}
A spacetime on which there are no causal loops, that is, $\ms L(\mc M,\causal)=\emptyset$, is called {\em causal spacetime}. Similarly we define a {\em chronological spacetime} by $\ms L(\mc M,\lleq)=\emptyset$.
\end{definition}

\begin{proposition}
\label{thm_curve_in_cone}
Let $\gamma$ be a curve in $\mc M$, and $a\in\gamma$.
If $\gamma$ is a causal curve, then $\gamma_{a\pm}\subset J^\pm(a)$.
If $\gamma$ is a chronological curve, then $\gamma_{a\pm}\subset I^\pm(a)$.
If $\gamma$ is a lightlike curve, then $\gamma_{a\pm}\subset E^\pm(a)$.
\end{proposition}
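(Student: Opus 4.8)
The plan is to prove each of the three assertions by directly comparing the definition of the set $\gamma_{a\pm}$ with the definition of the corresponding cone, treating the relations $\causal$, $\lleq$, and $\to$ separately. I expect no appeal to the totality condition (1) or the no-gap condition (2) of Definition \ref{def_causal_curve} to be needed: the inclusions hold simply because of how $\gamma_{a\pm}$ is carved out of $\gamma$ by the defining relation $\triangleleft$. In fact the argument would go through for an arbitrary subset $\gamma\subset\mc M$, so the curve hypothesis plays no role here.

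First I would treat the causal case, where $\triangleleft$ is $\causal$. By definition $\gamma_{a+}=\{b\in\gamma : a\causal b\}$, and since $\gamma\subset\mc M$ while $J^+(a)=\{b\in\mc M : a\causal b\}$, every $b\in\gamma_{a+}$ satisfies $a\causal b$ and hence lies in $J^+(a)$; thus $\gamma_{a+}\subset J^+(a)$. The inclusion $\gamma_{a-}\subset J^-(a)$ is identical, using $b\causal a$ in place of $a\causal b$. The lightlike case is handled the same way: with $\triangleleft$ equal to $\to$ one has $\gamma_{a+}=\{b\in\gamma : a\to b\}\subset\{b\in\mc M : a\to b\}=E^+(a)$, and symmetrically $\gamma_{a-}\subset E^-(a)$.

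The chronological case requires a moment's extra care, and this is the only place where any obstacle arises. Here $\triangleleft$ is $\lleq$, so $\gamma_{a+}=\{b\in\gamma : a\lleq b\}$, which by the definition of $\lleq$ equals $\{a\}\cup\{b\in\gamma : a\ll b\}$, whereas $I^+(a)=\{b\in\mc M : a\ll b\}$ excludes $a$ itself (indeed $a\ll a$ fails, since $a\to a$ holds by reflexivity, so $\ll$ is irreflexive). The inclusion is therefore to be read for the nontrivial points of the curve: for any $b\in\gamma_{a+}$ with $b\neq a$, the relation $a\lleq b$ together with $b\neq a$ forces $a\ll b$, so $b\in I^+(a)$. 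Hence $\gamma_{a+}\setminus\{a\}\subset I^+(a)$, and symmetrically $\gamma_{a-}\setminus\{a\}\subset I^-(a)$.

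The main (and essentially only) subtlety is thus bookkeeping about the reflexive point $a$: because $\lleq$ is reflexive while $\ll$ is not, the chronological cones $I^\pm(a)$ omit $a$, so the inclusion there is naturally understood up to that single point (equivalently $\gamma_{a\pm}\setminus\{a\}\subset I^\pm(a)$). For the causal and lightlike relations this does not occur, since $\causal$ and $\to$ are both reflexive and the cones $J^\pm(a)$, $E^\pm(a)$ correspondingly contain $a$, making those inclusions literal.
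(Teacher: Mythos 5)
Your proof is correct and takes essentially the same route as the paper: both arguments amount to unwinding the definition of $\gamma_{a\pm}$ against the definitions of $J^\pm(a)$, $I^\pm(a)$, $E^\pm(a)$ (the paper phrases this as a consequence of totality of the relevant relation on the curve, but as you observe, the filtering in the definition $\gamma_{a\pm}=\{b\in\gamma : a\triangleleft b\}$ already does all the work). Your extra care in the chronological case is warranted: since $\lleq$ is reflexive while $\ll$ is not, $a\in\gamma_{a\pm}$ but $a\notin I^\pm(a)$, so the stated inclusion there really only holds as $\gamma_{a\pm}\setminus\{a\}\subset I^\pm(a)$ --- a point the paper glosses over in the proposition but implicitly concedes in Corollary \ref{thm_curve_in_intervals} by writing $I(a,b)\cup\{a,b\}$.
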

\begin{proof}
Follow from the condition that the relation $\causal$ is total on any causal curve, $\lleq$ is total on any chronological curve, and $\to$ is total on any lightlike curve.
\end{proof}

\begin{corollary}
\label{thm_curve_in_intervals}
Let $\gamma$ be a curve in $\mc M$, and $a,b\in\gamma$.
If $\gamma$ is a causal curve, then $\gamma_{ab}\subset J(a,b)$.
If $\gamma$ is a chronological curve, then $\gamma_{ab}\subset I(a,b)\cup\{a,b\}$.
\end{corollary}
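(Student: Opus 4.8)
The plan is to reduce both assertions to Proposition~\ref{thm_curve_in_cone} by unfolding the definition of the segment, $\gamma_{ab}=\gamma_{a+}\cap\gamma_{b-}$, and intersecting the cone inclusions it supplies; the chronological case then requires only a careful treatment of the two endpoints.

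For the causal case I would argue directly. Since $\gamma$ is a causal curve, Proposition~\ref{thm_curve_in_cone} gives $\gamma_{a+}\subset J^+(a)$ and $\gamma_{b-}\subset J^-(b)$. Intersecting these two inclusions yields $\gamma_{ab}=\gamma_{a+}\cap\gamma_{b-}\subset J^+(a)\cap J^-(b)=J(a,b)$, which is exactly the claim; nothing beyond the proposition and the definition of $J(a,b)$ is needed. Note that here no endpoints must be excluded, since $\causal$ is reflexive and hence $a\in J^+(a)$ and $b\in J^-(b)$.

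For the chronological case the naive analogue $\gamma_{ab}\subset I^+(a)\cap I^-(b)$ fails precisely at the endpoints, because the relation $\lleq$ defining chronological curves is reflexive whereas the relation $\ll$ defining $I^\pm$ is not: indeed $a\to a$ forces $a\not\ll a$, so $a\notin I^+(a)$ even though $a\in\gamma_{a+}$. I would therefore take an arbitrary $c\in\gamma_{ab}$, so that $a\lleq c$ and $c\lleq b$, and split into cases. If $c=a$ or $c=b$ then trivially $c\in\{a,b\}$; otherwise $c\neq a$ and $c\neq b$, and the definition of $\lleq$ upgrades $a\lleq c$ to $a\ll c$ and $c\lleq b$ to $c\ll b$, whence $c\in I^+(a)\cap I^-(b)=I(a,b)$. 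Collecting the cases gives $\gamma_{ab}\subset I(a,b)\cup\{a,b\}$.

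The only nontrivial point, and hence the main obstacle, is this endpoint bookkeeping: one must notice that $\lleq$ and $\ll$ differ exactly on the diagonal, which is why the two endpoints $a,b$ must be adjoined to $I(a,b)$ in the conclusion while no such adjunction is needed in the causal case. Everything else follows immediately from Proposition~\ref{thm_curve_in_cone} and the definitions of the causal and chronological intervals.
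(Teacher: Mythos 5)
Your proof is correct and follows essentially the same route as the paper, which simply cites Proposition~\ref{thm_curve_in_cone} and the definition $\gamma_{ab}=\gamma_{a+}\cap\gamma_{b-}$. Your explicit endpoint bookkeeping in the chronological case (where $\lleq$ and $\ll$ differ on the diagonal, so $a,b$ must be adjoined by hand) is a detail the paper's one-line proof leaves implicit, and you handle it correctly.
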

\begin{proof}
Follows from Proposition \ref{thm_curve_in_cone}.
\end{proof}

\begin{corollary}
\label{thm_curve_continuous}
The causal, chronological and lightlike curves and loops are continuous with respect to the interval topology defined in Section \sref{s_topology}.
\end{corollary}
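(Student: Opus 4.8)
The plan is to read ``continuous'' as the assertion that, when a curve $\gamma\in\ms C(\mc M,\triangleleft)$ carries the order topology induced by its total order $\triangleleft$, the inclusion $\iota\colon\gamma\hookrightarrow\mc M$ is continuous for the interval topology on $\mc M$. Since a loop is, by Definition \ref{def_causal_curve}, a curve after the removal of a single point, it suffices to treat curves. Because the interval topology is generated by the subbasis $\{I^+(c),I^-(c)\mid c\in\mc M\}$, continuity of $\iota$ is equivalent to showing that $\iota^{-1}(I^+(c))=\gamma\cap I^+(c)$ and $\iota^{-1}(I^-(c))=\gamma\cap I^-(c)$ are open in the order topology of $\gamma$ for every $c\in\mc M$. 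I would check the $I^+$ case only, the $I^-$ case being symmetric under reversal of $\triangleleft$, and the three types of curve being handled in parallel.

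The core step is to locate these preimages inside the order structure of $\gamma$. Note $\gamma\cap I^+(c)=\{b\in\gamma\mid c\ll b\}$; fix $x$ in this set, so $c\ll x$. First I would exploit transitivity of $\ll$ (Proposition \ref{thm_causality_transitive}): for a chronological curve ($\triangleleft=\lleq$), if $x\lleq b'$ with $b'\in\gamma$, then $c\ll x\lleq b'$ yields $c\ll b'$, so $\gamma\cap I^+(c)$ is upward closed for $\lleq$; dually, the $I^-$ preimage is downward closed. For an \emph{interior} point $x$, i.e.\ one with some $a'\in\gamma$, $a'\triangleleft x$, already satisfying $c\ll a'$, I would then produce an order-open neighborhood directly: choosing also some $b'\in\gamma$ with $x\triangleleft b'$ and passing to the segment $\gamma_{a'b'}$, Corollary \ref{thm_curve_in_intervals} places it inside the interval $I(a',b')$, and $I(a',b')\subseteq I^+(c)$ because $c\ll a'$; hence an order-open segment around $x$ lands in $I^+(c)$. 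For causal and lightlike curves I would argue analogously, replacing $I(a',b')$ by $J(a',b')$ via Proposition \ref{thm_curve_in_cone} and Corollary \ref{thm_curve_in_intervals} and then intersecting against the chronological cone of $c$.

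The genuinely delicate point, which I expect to be the main obstacle, is openness at an extremal point of $\gamma\cap I^+(c)$, namely its $\triangleleft$-minimum $x_0$ when one exists. There $c\ll x_0$, but no earlier point of $\gamma$ need lie in $I^+(c)$, so the up-set property alone does not furnish an order-open neighborhood of $x_0$: one must show that $x_0$ is the global minimum of $\gamma$, or is $\triangleleft$-isolated from below, or is separated from the points beneath it by a gap — equivalently, that the curve's entry into the cone occurs at an excluded boundary point, yielding an open ray, exactly as the openness of $I^+(c)$ forces in the continuous manifold setting. I expect this boundary analysis to be where the saturation condition (2) of Definition \ref{def_causal_curve} is needed, as it forbids comparable points of $\mc M$ accumulating below $x_0$ while all failing $c\ll\,\cdot\,$, which is precisely the pathology that would otherwise defeat openness. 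Once the extremal points are disposed of for $I^+$, the $I^-$ case follows by symmetry and the passage from curves to loops is immediate, completing the proof.
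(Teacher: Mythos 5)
The part of your argument that actually goes through --- an interior point $x$ of $\gamma\cap I^+(c)$ sits in a segment $\gamma_{a'b'}$ which Corollary \ref{thm_curve_in_intervals} places inside an interval, and the intervals generate the topology --- is, in substance, the paper's \emph{entire} proof: the paper invokes $\gamma_{ab}\subset J(a,b)$ and the fact that the interiors of such intervals form a base for the interval topology, and declares continuity in two sentences, with no further case analysis. So on the portion you complete, you and the paper use the same mechanism; you merely set it up more carefully (order topology on $\gamma$, preimages of the subbasic sets $I^{\pm}(c)$).

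The gap is the extremal case, which you yourself single out as ``the main obstacle'' and then do not resolve: you only \emph{expect} condition (2) of Definition \ref{def_causal_curve} to rule out a $\triangleleft$-minimum $x_0$ of $\gamma\cap I^+(c)$ that is neither the minimum of $\gamma$ nor isolated from below. That condition is a saturation requirement on which points of $\mc M\setminus\gamma$ may be adjoined to $\gamma$; it constrains $\mc M$ relative to $\gamma$, not the order type of $\gamma$ itself, and it gives no handle on whether the points $a'\triangleleft x_0$ of $\gamma$ accumulating at $x_0$ satisfy $c\ll a'$. Nothing in the paper's axioms forces the ``push-up'' behaviour (from $c\ll x_0$ and $a'\triangleleft x_0$ near $x_0$ conclude $c\ll a'$, or else exhibit an order gap) that openness at $x_0$ requires, so this step is an unproven claim, not a routine verification. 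A secondary soft spot: for causal and lightlike curves you propose to replace $I(a',b')$ by $J(a',b')$ and ``intersect against the chronological cone of $c$'', but $J(a',b')\subseteq I^+(c)$ for $c\ll a'$ is again a push-up property that transitivity of $\ll$ (Proposition \ref{thm_causality_transitive}) alone does not deliver in an abstract horismotic set. To be fair, the paper's own two-line proof engages with none of this; your write-up correctly locates the difficulty but, as submitted, does not close it.
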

\begin{proof}
From Corollary \ref{thm_curve_in_intervals}, for any causal curve $\gamma$ and $a,b\in\gamma$, the curve $\gamma_{ab}\subset J(a,b)$. Since the interiors of the intervals of the form $J(a,b)$ form a base for the interval topology, it follows that $\gamma$ is continuous. Similarly if $\gamma$ is a loop, since the curve obtained by removing a point of $\gamma$ is a continuous curve.
\end{proof}

\section{Recovering the dimension}
\label{s_dim}

\subsection{Example: the two-dimensional case}
\label{s_dim_example}

We look first at a simple example of recovering the conformal structure of a Lorentzian manifold in two-dimensions, which later will be distilled and generalized.

Assume that through any event $a\in\mc M$ pass exactly two maximal lightlike lines, say $\gamma^a_1$ and $\gamma^a_2$. 
For a Lorentzian manifold, the {\em global hyperbolicity condition} states that for any two events $a,b\in\mc M$, the set $J(a,b)$ is compact.
The notion of global hyperbolicity extends naturally to a general horismotic set $\mc M$, because it is also a topological space, as shown in Section \sref{s_topology}.
We assume that $\mc M$ satisfies global hyperbolicity.
In the two-dimensional case, this is equivalent to the condition that for any event $p\in I^+(a)$, $\gamma^a_1$ intersects $\gamma^p_2$ and $\gamma^a_2$ intersects $\gamma^p_1$ (see Fig. \ref{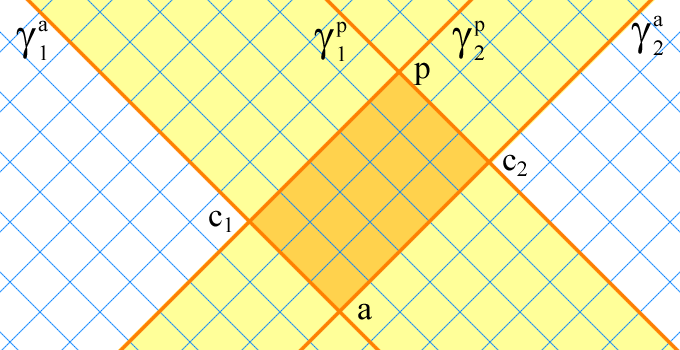}). Let us see why the two conditions are equivalent. If for example $\gamma^a_1$ would not intersect $\gamma^p_2$, then the set $J(a,b)$ would not be compact. Because of the assumptions at the beginning of this paragraph, the intersection $\gamma^a_1\cap\gamma^p_2$ contains a unique event $c_1$. Similarly, $\gamma^a_2\cap\gamma^p_1$ contains a unique event $c_2$.

\image{2dim-null.png}{0.5}{Any event $p\in I^+(a)$ has lightcone coordinates $(f_1(c_1),f_2(c_2))$, and $\{p\}=E^+(c_1)\cap E^+(c_2)$.}

The unique events $c_1$ and $c_2$ uniquely identify $p$. For any maximal lightlike line $\gamma$ through $a$, a parametrization $f:\gamma\to\R$ can be chosen so that for any $c,d\in \gamma$, $c\to d$, $f(c)\leq f(d)$. Then, the lightlike lines $\gamma^a_1$ and $\gamma^a_2$ together with such parametrizations will give coordinates for $I^+(a)$. In addition, if the parametrization can be chosen so that $f(\gamma)$ is an open interval in $\R$, then $\mc M$ gains a structure of topological manifold. A cover of $\mc M$ with open sets on which such lightcone coordinates are defined, and such that the transition maps are differentiable, makes $\mc M$ into a differentiable manifold of dimension $2$.

Note that there is no need to assume global hyperbolicity, a local version is enough: for any event $p\in \mc M$ there is an open set $U$ containing $p$ which is globally hyperbolic. In General Relativity and other relativistic theories of gravity, this local version is satisfied also on spacetimes that are not globally hyperbolic.

We now detail these ideas and extend them to $n$ dimensions.

\subsection{Dimensionality}
\label{s_dim_dim}

The mere existence of a topology defined by chronological intervals on $\mc M$ doesn't imply anything about dimension.
In order to assign to $\mc M$ a dimension, we have to define it and to require it one way or another.

\begin{definition}
\label{def_dim}
A number $n\in\N$ is called {\em dimension} of an open set $U\subset\mc M$ if there are $n$ distinct causal curves $\gamma_1,\ldots,\gamma_n\subset\mc M$
satisfying
\begin{enumerate}
	\item 
for any $p\in U$ there are $n$ events $c_i\in \gamma_i$, $i\in\{1,\ldots,n\}$ so that
\begin{equation}
\label{eq_dim}
\{p\}=\bigcap_{i=1}^n E^+(c_i)\tn{ or }\{p\}=\bigcap_{i=1}^n E^-(c_i),
\end{equation}
	\item 
	The number $n$ is the smallest with this property.
\end{enumerate}
We say that the curves $\gamma_1,\ldots,\gamma_n$ form a {\em lightcone basis of dimension $n$} of the open set $U$ (Fig. \ref{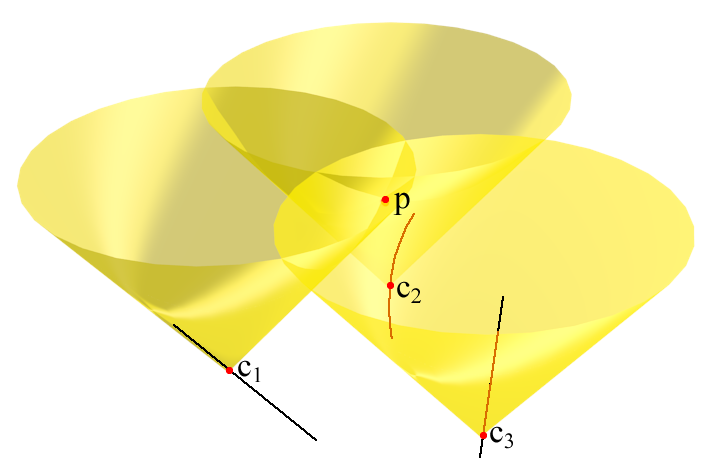}).
\end{definition}

\image{3dim-null.png}{0.5}{An event $\{p\}=E^+(c_1)\cap E^+(c_2)\cap E^+(c_3)$, with lightcone coordinates $(f_1(c_1),f_2(c_2),f_3(c_3))$.}

\begin{definition}
\label{def_dim_spacetime}
Let $n\in\N$. An {\em $n$-dimensional horismotic set} is a horismotic set $(M,\to)$ so that for any event $p\in \mc M$ there is an open set $p\in U$ of dimension $n$. We say that the dimension of $\mc M$ is $\dim \mc M=n$. 
\end{definition}

\subsection{Lightcone coordinates}
\label{s_lightcone_coords}

\begin{definition}
\label{def_parametrization}
A {\em parametrized causal curve} is a causal curve $\gamma$ and a function $f:\gamma\to\R$ (called {\em parametrization} of $\gamma$) which keeps the total order, that is, $f(c) \leq f(d)$ if and only if $c\causal d$. A causal curve is called {\em parametrizable} if it admits a parametrization.
\end{definition}

\begin{definition}
\label{def_lightcone_coordinates}
Let $U$ be an open set of $\mc M$, and $n$ distinct causal curves $\gamma_1,\ldots,\gamma_n\subset\mc M$, as in Definition \ref{def_dim}.
If the causal curves $\gamma_1,\ldots,\gamma_n$ are parametrized by some functions $f_1,\ldots,f_n$, the lightcone basis they form assigns to any event $p\in U$ an $n$-tuple of real numbers $(f_1(c_1),\ldots,f_n(c_n))$, hence coordinates, which we call {\em lightlike coordinates}.
\end{definition}

\subsection{Recovering the Lorentzian spacetime}
\label{s_spacetime}

Note that what we said so far works equally for continuous and discrete $\mc M$. Now we focus on topological manifolds.

\begin{definition}
\label{def_continuous_spacetime}
If the coordinates can be chosen to map the causal curve $\gamma$ to an open interval in $\R$, then for any event $a\in\mc M$ there is a local homeomorphism between an open set $a\in U$ and $\R^n$. In this case, $\mc M$ is called {\em continuous spacetime}.
\end{definition}

Consider an open cover $\mc U$ of $\mc M$ so that for any $U_i\in\mc U$ there is a lightcone coordinate system $f_i:U_i\to\R^n$. This endows $\mc M$ with a structure of topological manifold.

If $(\mc M,\to)$ has a structure of topological manifold, and there is an open cover $\mc U$ of $\mc M$ so that for any $U_i\in\mc U$ there is a lightcone coordinate system $f_i:U_i\to\R^n$, and so that for any $U_i\cap U_j\neq\emptyset$ the {\em transition function} $f_j|_{U_i\cap U_j}\circ(f_i|_{U_i\cap U_j})^{-1}$ is differentiable, then the cover $\mc U$ together with the charts $(U_i,f_i)$ determine a {\em differential structure} on $\mc M$.

An $n$-dimensional horismotic set $(\mc M,\to)$ whose coordinates are continuous and such that the transition maps are differentiable is naturally endowed with a structure of differentiable manifold of dimension $n$. We know the light geodesics, and they give the conformal structure of $\mc M$, that is, the metric is defined up to a scaling function $\Omega:\mc M\to(0,\infty)$.

\section{Discussion}
\label{s_discussion}

What is the correspondence between causal sets and horismotic sets? Given that the causal sets approach is based only on the causal relation, which does not distinguish between horismos and chronological relations, there are more ways to choose which pairs of events in causal relation are in a horismos or in a chronological relation. Moreover, if the events are selected from a continuous manifold by sprinkling, the chance that two events of the causal set are in a horismos relation is practically zero. In the continuum case, even if we start from the causal relation, the solution is unique, at least for distinguishing spacetimes: the boundary of the lightcone at an event $p$ gives the events in a horismos relation with $p$, and the interior gives those in a chronology relation with $p$. And as mentioned, the horismos relation is enough to reconstruct the causal structure \cite{Minguzzi2009HorismosGeneratesCausal}.

There are some advantages in starting with the horismos relation rather than the causal one. The chronological and causal relations can be obtained from the horismos relation. But if we start with the causal relations, as in the causal set approach, we can't obtain the horismos relation, unless for example the spacetime has a structure of topological manifold, or at least can be embedded densely in a topological manifold, which is not the case for causal sets. Even if we have the means to distinguish the horismos relation from the causal relation, we still have to impose additional compatibility constraints, since the causal relation can be generated by the horismos. 
But if we start with the horismos relation, we can obtain everything about the spacetime, including the geometry up to a scaling factor, and we don't have to impose compatibility conditions, showing that the horismos relation is more fundamental.

The most important advantage of horismos relation appears to be that we can obtain a spacetime with a definite dimension by imposing a simple condition.
This may be of help in building models similar to causal sets and with definite dimension, but it may still be difficult in practice.

The properties and results that can be derived starting only with the horismos relation have correspondent in those of Lorentzian manifolds, which are presented for example in \cite{Penrose1972TopologyInRelativity}. However, we don't enter here in much detail about this, the main purpose being to recover the causal structure and the dimensionality of spacetime.

An advantage of the causal set approach is that it aims to recover in a good approximation the manifold and the conformal structure, but also to find the conformal factor needed to recover the metric, by approximating the volume with the number of events in that region. This relation between volume and number of events seems pretty clear, and at this point we don't know a way to do the same in the approach of horismotic sets, especially when the dimension is well defined as in Section \sref{s_dim_dim}. However, the volume information can be provided by a measure.

Horismotic sets in which the horismotic relation is not antisymmetric can be used to include additional structures. 
For example, consider \emph{gauge theory}, described by a fiber bundle $\mc E$ over a distinguishing Lorentzian manifold $\mc M$, and let $F$ be the typical fiber. The causal relations on $\mc M$ can be lifted to $\mc E$, in the following way. Let $p,q\in\mc E$ be  any pair of points in $\mc E$, then $p=(a,x)\in\mc M\times F$ and $q=(b,y)\in\mc M\times F$. We define the horismos relation on $\mc E$ by $p\to q$ iff $a\to b$, and similarly we define the chronological and causal relations on $\mc E$.
But then, if $a=b$ and $x\neq y$, it follows that $p\to q$ and $q\to p$ but $p\neq q$.
So even if antisymmetry holds on the base space $\mc M$, it does not hold on the bundle $\mc E$. The points of $\mc E$ in the same fiber are in the horismos (and also chronological) relation, but they are distinct. This corresponds to the gauge degrees of freedom.
Of course, the typical fiber $F$ is not simply a set, and should be endowed with additional structure, which is not captured in the horismos relation.

In short, the approach of starting with the horismos relation:
\begin{enumerate}
	\item 
is very general, because we just start with a reflexive relation, which we identify as the horismos relation;
	\item 
works for both discrete and continuous spacetimes;
	\item 
allows us to recover the causal and chronological relations, while recovering the horismotic relation from the causal relation works only in special cases, for example for continuous spacetimes;
	\item 
allows us to recover the interval topology;
	\item 
avoids redundancy and compatibility conditions when defining the causal structure, which are present when starting from the causal relation;
	\item 
allows to define causal, chronological and lightlike curves and loops, without the need of differential or even topological structures;
	\item 
allows to recover the dimensionality, as well as the manifold structure, under simple conditions (while these problems are still open in the case of causal sets).
\end{enumerate}

In another article \cite{Sto15b} it is given an additional reason to consider the causal structure as fundamental in General Relativity: while the metric becomes singular at some black hole and big bang singularities, the causal structure remains regular.

\subsubsection*{Acknowledgments}

The author wishes to thank the anonymous reviewers, whose comments helped improve the quality of the manuscript.



\begin{thebibliography}{10}

\bibitem{KronheimerPenrose1967CausalSpaces}
E.H. Kronheimer and R.~Penrose.
\newblock On the structure of causal spaces.
\newblock In {\em Mathematical Proceedings of the Cambridge Philosophical
  Society}, volume~63, pages 481--501. Cambridge Univ. Press, 1967.

\bibitem{Zeeman1964CausalityLorentz}
E.C. Zeeman.
\newblock Causality implies the {L}orentz group.
\newblock {\em Journal of Mathematical Physics}, 5(4):490--493, 1964.

\bibitem{Zeeman1967TopologyMinkowski}
E.C. Zeeman.
\newblock The topology of {M}inkowski space.
\newblock {\em Topology}, 6(2):161--170, 1967.

\bibitem{Finkelstein1969SpaceTimeCode}
D.~Finkelstein.
\newblock Space-time code.
\newblock {\em Phys. Rev.}, 184(5):1261, 1969.

\bibitem{Hawking1976NewTopologyST}
S.W. Hawking, A.R. King, and P.J. McCarthy.
\newblock A new topology for curved space--time which incorporates the causal,
  differential, and conformal structures.
\newblock {\em Journal of mathematical physics}, 17(2):174--181, 1976.

\bibitem{Malament1977TopologySpaceTime}
D.B. Malament.
\newblock The class of continuous timelike curves determines the topology of
  spacetime.
\newblock {\em Journal of mathematical physics}, 18(7):1399--1404, 1977.

\bibitem{Minguzzi2009HorismosGeneratesCausal}
E.~Minguzzi.
\newblock In a distinguishing spacetime the horismos relation generates the
  causal relation.
\newblock {\em Classical and Quantum Gravity}, 26(16):165005, 2009.

\bibitem{tHooft1978QGCausalSets}
G.~Hooft.
\newblock Quantum gravity: a fundamental problem and some radical ideas.
\newblock {\em NATO Advanced Study Institutes series: Series B, Physics},
  44:323--345, 1978.

\bibitem{Myrheim1978StatisticalGeometry}
J.~Myrheim.
\newblock Statistical geometry.
\newblock Technical Report No. CERN-TH-2538, 1978.

\bibitem{Bekenstein1973BHEntropy}
J.D. Bekenstein.
\newblock Black holes and entropy.
\newblock {\em Phys. Rev. D}, 7(8):2333, 1973.

\bibitem{Bardeen1973BHThermodynamics}
J.M. Bardeen, B.~Carter, and S.W. Hawking.
\newblock The four laws of black hole mechanics.
\newblock {\em Comm. Math. Phys.}, 31(2):161--170, 1973.

\bibitem{Penrose1972TopologyInRelativity}
R.~Penrose.
\newblock {\em Techniques of differential topology in relativity}, volume~7.
\newblock SIAM, 1972.

\bibitem{HE95}
S.~W. Hawking and G.~F.~R. Ellis.
\newblock {\em {The Large Scale Structure of Space Time}}.
\newblock Cambridge University Press, 1995.

\bibitem{Bombelli1987SpaceTimeCausalSet}
L.~Bombelli, J.~Lee, D.~Meyer, and R.D. Sorkin.
\newblock Space-time as a causal set.
\newblock {\em Phys. Rev. Lett.}, 59(5):521, 1987.

\bibitem{Sorkin1990SpacetimeAndCausalSets}
R.D. Sorkin.
\newblock Spacetime and causal sets.
\newblock {\em Relativity and gravitation: Classical and quantum}, pages
  150--173, 1990.

\bibitem{Sorkin2005CausalSets}
R.D. Sorkin.
\newblock Causal sets: Discrete gravity.
\newblock In {\em Lectures on quantum gravity}, pages 305--327. Springer, 2005.

\bibitem{Meyer1988DimensionOfCausalSets}
D.A. Meyer.
\newblock {\em The dimension of causal sets}.
\newblock PhD thesis, Massachusetts Institute of Technology, 1988.

\bibitem{Reid2003ManifoldDimensionCausalSet}
D.D. Reid.
\newblock Manifold dimension of a causal set: Tests in conformally flat
  spacetimes.
\newblock {\em Phys. Rev. D}, 67(2):024034, 2003.

\bibitem{Eichhorn2014SpectralDimCausalSet}
A~Eichhorn and S~Mizera.
\newblock Spectral dimension in causal set quantum gravity.
\newblock {\em Classical and Quantum Gravity}, 31(12):125007, 2014.

\bibitem{Bombelli1989OriginLorentzianGeometry}
L.~Bombelli and D.A. Meyer.
\newblock The origin of {L}orentzian geometry.
\newblock {\em Physics Letters A}, 141(5):226--228, 1989.

\bibitem{Sto15b}
O.~C. Stoica.
\newblock Causal structure and spacetime singularities.
\newblock {\em Preprint
  \href{http://arxiv.org/abs/1504.07110}{arXiv:1504.07110}}, 2015.

\end{thebibliography}
\end{document}